\newtheorem{proposition}{Proposition}
\newtheorem{definition}{Definition}
\newtheorem{theorem}{Theorem}
\icmltitlerunning{Q-value Path Decomposition for Deep Multiagent Reinforcement Learning}
\begin{document}

\twocolumn[
\icmltitle{Q-value Path Decomposition for Deep Multiagent Reinforcement Learning}




\begin{icmlauthorlist}
\icmlauthor{Yaodong Yang}{tju}
\icmlauthor{Jianye Hao}{tju}
\icmlauthor{Guangyong Chen}{tencent}
\icmlauthor{Hongyao Tang}{tju}
\icmlauthor{Yingfeng Chen}{netease}
\icmlauthor{Yujing Hu}{netease}
\icmlauthor{Changjie Fan}{netease}
\icmlauthor{Zhongyu Wei}{fdu}
\end{icmlauthorlist}

\icmlaffiliation{tju}{Tianjin Univeristy}
\icmlaffiliation{tencent}{Tencent}
\icmlaffiliation{netease}{NetEase Fuxi AI Lab}
\icmlaffiliation{fdu}{Fudan University}
\icmlcorrespondingauthor{Jianye Hao}{jianye.hao@tju.edu.cn}

\icmlkeywords{Machine Learning, ICML}

\vskip 0.3in
]



\printAffiliationsAndNotice{}  

\begin{abstract}
Recently, deep multiagent reinforcement learning (MARL) has become a highly active research area as many real-world problems can be inherently viewed as multiagent systems. A particularly interesting and widely applicable class of problems is the partially observable cooperative multiagent setting, in which a team of agents learns to coordinate their behaviors conditioning on their private observations and commonly shared global reward signals. One natural solution is to resort to the centralized training and decentralized execution paradigm. During centralized training, one key challenge is the multiagent credit assignment: how to allocate the global rewards for individual agent policies for better coordination towards maximizing system-level's benefits. In this paper, we propose a new method called Q-value Path Decomposition (QPD) to decompose the system's global Q-values into individual agents' Q-values. Unlike previous works which restrict the representation relation of the individual Q-values and the global one, we leverage the integrated gradient attribution technique into deep MARL to directly decompose global Q-values along trajectory paths to assign credits for agents. We evaluate QPD on the challenging StarCraft II micromanagement tasks and show that QPD achieves the state-of-the-art performance in both homogeneous and heterogeneous multiagent scenarios compared with existing cooperative MARL algorithms.
\end{abstract}

\section{Introduction}
Cooperative multiagent reinforcement learning problem has been studied extensively in the last decade \cite{busoniu_comprehensive_2008,gupta_cooperative_2017,palmer_lenient_2018}, where a system of agents learn towards coordinated policies to optimize the accumulated global rewards. Cooperative multiagent systems (MAS) have been shown to be a useful paradigm in numerous applications, e.g., the coordination of autonomous vehicles \cite{Cao2012An} and optimizing the productivity of a factory in distributed logistics \cite{Ying2005Multi}.

One natural way of addressing cooperative MARL problem is the centralized approach, which views the overall MAS as a whole and solves it as a single-agent learning task. In such settings, existing reinforcement learning (RL) techniques can be leveraged to learn joint optimal policies based on agents’ joint observations and common rewards \cite{tan_multi-agent_1993}. However, the centralized approach usually does not scale well, since the joint action space of agents grows exponentially as the increase of the number of agents.
Furthermore, centralized approaches may not be applicable in practical settings where only distributed policies can be deployed due to physical observation and communication constraints \cite{foerster_counterfactual_2018}, i.e., each agent can only decide to behave based on its local observations.

To address these above limitations, an alternative technique is to resort to decentralized approaches, in which each agent learns its optimal policy independently based on its local observations and individual rewards. However, in cooperative multiagent environments, all agents receive the same global reward signal. Letting individual agents learn concurrently based on the global reward (aka. independent learners) has been well studied \cite{tan_multi-agent_1993} and shown to be difficult in even simple two-agent, single-state stochastic coordination problems. One main reason is that the global reward signal brings the nonstationarity that agents cannot distinguish between the stochasticity of the environment and explorative behaviors of other co-learners \cite{lowe_multi-agent_2017}, and thus may mistakenly update their policies. Therefore, the key to promoting the coordination of agents is to correctly allocate the reward signal for each agent, which is also known as the multiagent credit assignment problem \cite{chang_all_2004}.

For simple problems, it might be possible to manually design the individual reward function for each agent based on domain knowledge. However, the heuristic design requires manual efforts and is not always applicable in complex cooperative multiagent tasks. It would be desirable if there is any generalized principle or mechanism to generate individual reward functions in a universal and automatic manner. Foerster et al. \cite{foerster_counterfactual_2018} proposed a multiagent actor-critic method called counterfactual multiagent (COMA) policy gradients, which uses a counterfactual baseline that marginalizes out a single agent’s action while keeping the other agents’ actions fixed to calculate the advantage for agent policies. Sunehag et al. \cite{sunehag_value-decomposition_2018} proposed a value-decomposition network (VDN), which learns to decompose the team value function into agent-wise value functions. However, this work assumes that the joint action-value function for the system can be decomposed into the sum of agents' value functions only based on local observations. Such an assumption is not applicable for complex systems where agents have complicated relations and the decomposition is not accurate as the global information is not fully utilized. Based on VDN, QMIX relaxes the limitation of the linear relation of global Q-values ($Q_{tot}$) and the local individual Q-values ($Q^i$), which enforces a monotonicity constraint on the relationship between $Q_{tot}$ and each $Q^i$. QMIX employs a network that estimates joint action-values as a complex non-linear combination of per-agent values that condition only on local observations.

However, VDN and QMIX both restrict the relation representation between the individual Q-values and the global Q-value while the individual Q-values are only estimated from local observations. Such a way restricts the accuracy of the individual Q-values and may impede the learning of coordinated policies in complex multiagent scenarios. Recently, QTRAN \cite{son_qtran_2019} is proposed to guarantee optimal decentralization by using linear constraints between individual utilities and $Q_{tot}$, and avoids the representation limitations introduced by VDN and QMIX. But the constraints on the optimization problem are computationally intractable and practical relaxations lead to unsatisfied performance in complex tasks \cite{mahajan_maven_2019}.


In this paper, we propose a novel Q-value decomposition technique from the perspective of deep learning (DL). Similar to previous works, we set in a centralized learning and decentralized execution paradigm, where agents are trained centrally with shared information while executing in a decentralized manner. Our method employs integrated gradients \cite{pmlr-v70-sundararajan17a} to analyze the contribution of each agent to the global Q-value $Q_{tot}$, and regards the contribution of each agent as its individual $Q^i$, which is used as the supervision signal to train each agent's Q-value function. As we utilize trajectories of RL to implement attribution decomposition, we call this method Q-value Path Decomposition (QPD). 
Besides, we design a multi-channel critic to generate $Q_{tot}$ by following the individual, group and system concepts progressively based on agents' joint observations and actions. Lastly, we merge the integrated gradients into RL to decompose $Q_{tot}$ into approximative $Q^i$ with respect to each agent's local observation and action for precise credit assignment. We evaluate QPD using the StarCraft II micromanagement tasks. Experiments show that QPD learns effective policies in both homogeneous and heterogeneous scenarios with the state-of-the-art performance.

There have seen many related contributions on the setting of the decentralized partially observable Markov decision process (Dec-POMDPs). For the large-scale MAS setting, Duc Thien Nguyen et al., \cite{nguyen_policy_2017,nguyen_credit_2018} study the Collective Dec-POMDPs where agent interactions are dependent on their collective influence on each other rather than their identities. At the same time, Yang et al., \cite{yang_mean_2018} assume that each agent is affected by its neighbors to reduce the nonstationary phenomenon and derive a mean-field approach. Above two methods are only investigated in the large-scale multiagent settings and satisfy the theoretical support under the large-scale assumption. Another notable direction is the multiagent exploration problem. Mahajan et al., \cite{mahajan_maven_2019} propose MAVEN to solve it, where value-based agents condition their behaviour on the shared latent variable controlled by a hierarchical policy. Their latent space which controls the exploration of joint behaviours mainly affects on the agent's individual utility network and is orthogonal to ours.

The remainder of this paper is organized as follows. We introduce the Dec-POMDPs and integrated gradients in Section~\ref{section:background}. Then in Section~\ref{section:framework}, we explain our QPD framework for deep MARL in details. Next, we validate our methods in the challenging StarCraft II platform in Section~\ref{section:experiment}. Finally, conclusions and future work are provided in Section~\ref{section:conclusion}. 


\section{Background}
\label{section:background}
\subsection{Dec-POMDPs}
Fully cooperative multiagent tasks can be modeled as Dec-POMDPs \cite{oliehoek_concise_2016}. Formally, a Dec-POMDP $G$ is given by a tuple
\begin{equation}
G = <S, A, P, r, Z, O, n, \gamma>
\label{MDP}
\end{equation}
where $s \in S$ describes the true state of the environment. Dec-POMDPs consider partially observable scenarios in which an observation function $Z(s, i): S \times N \rightarrow p(O)$, which defines the probability distribution of the observations $o^i \in O^i$ for each agent $i \in N \equiv \{1,...,n\}$ draws individually. At each time step, each agent $i$ selects its action $a_{i} \in A_i$ based on its local observation $o_{i}$ according to its stochastic policy $\pi_{i}:O_i \times A_i \rightarrow [0,1]$. The joint action $\vec{a} \in \vec{A}$ produces the next state according to the state transition function $P:S \times A_1 \times ... \times A_n \rightarrow S$. All agents share the same reward function $r(s,\vec{a}):S \times \vec{A} \to {R}$. All agents coordinate together to maximize the total expected return $J = {E}_{a_{1}\sim\pi_{1}, ..., a_{n}\sim\pi_{n}, s\sim  P} \sum_{t=0}^T \gamma^t r_{t}(s,\vec{a})$ where $\gamma$ is a discount factor and T is the time horizon. Our problem setting follows the paradigm of centralized training and decentralized execution \cite{foerster_counterfactual_2018}. That is, each agent executes its policy in a distributed manner, since agents may only observe the partial environmental information due to physical limitations (e.g., scope or interferer) and high communication cost in practice. However, each agent's policy can be trained in a centralized manner (using a simulator with additional global information) to improve the learning efficiency. The global discounted return is $R_{t}=\sum_{l=0}^{T-t} \gamma^{l} r_{t+l}$. The agents' joint policy induces a value function, i.e., an approximation of expectation over $R_{t}$, $V^{\vec{\pi}}(s_{t}) = {E}_{\vec{a}_{t} \sim \vec{\pi}, s_{t+1} \sim P}[R_{t}|s_{t}]$, and a global action-value $Q^{\vec{\pi}}(s_{t}, \vec{a}_{t}) = {E}_{\vec{a}_{t+1} \sim \vec{\pi}, s_{t+1} \sim P}[R_{t}|s_{t},\vec{a}_{t}]$  remarked as $Q_{tot}$.

\subsection{Integrated Gradients}
A lot of works intend to understand the input-output behavior of the deep network and attribute the prediction of a deep network to its input features \cite{ancona2018towards}. The goal of attribution methods is to determine how much influence does each component of input features have in the network output value \cite{braso2018attribution}.
\begin{definition}
Formally, suppose we have a function $F:\mathbb{R}^d \rightarrow \mathbb{R}$ that represents a deep network, and an input $x=(x_1,...,x_j,...,x_d) \in \mathbb{R}^d$. $\mathbb{R}$ is the set of real number. $F$ is the function with a $d$-dimension vector input. An attribution of the prediction at input $\vec{x}$ relative to a baseline input $\vec{b}$ is a vector $A_{F}(\vec{x},\vec{b})=(c_1,...c_j,...,c_d) \in \mathbb{R}^d$, where $c_j$ is the contribution value of $x_j$ to the difference between prediction $F(\vec{x})$ and the baseline prediction $F(\vec{b})$.
\end{definition}

The attribution methods are widely studied \cite{Baehrens_2010,binder_layer-wise_2016,montavon_methods_2018}. As one of them, integrated gradients takes use of path integral to aggregate the gradients along the inputs that fall on the lines between the baseline and the input \cite{pmlr-v70-sundararajan17a}, which is inspired by economic cost-sharing literature \cite{tarashev_risk_2016} with theoretical supports \cite{hazewinkel_encyclopaedia_1990}. The integrated gradients explains how much one feature affects the deep network output while changing from $F(\vec{b})$ to $F(\vec{x})$ along a straight line between $\vec{x}$ and $\vec{b}$. Although integrated gradients uses the straightline, there are many paths that monotonically interpolate between the two points, and each such path will yield a different attribution method depicting the feature changing process. The path integral focuses on the changing process of each variable to perform attribution and has shown impressive performance.

Formally, let $\tau(\alpha):[0,1]\rightarrow R^d$ be a smooth path function specifying a path in $R^d$ from the baseline $\vec{b}$ to the input $\vec{x}$, i.e., $\tau(0)=\vec{b}$ and $\tau(1)=\vec{x}$. Given a path function $\tau$, path integrated gradients are obtained by integrating gradients along the path $\tau(\alpha)$ for $\alpha \in [0,1]$. Mathematically, path integrated gradients along the $j$th dimension for input $\vec{x}$ is defined as follows.
\begin{equation}
\label{eq:pathig}
    c_{j} = PathIG^{\tau}_{j}(\vec{x}) ::= \int^{1}_{\alpha=0}\frac{\partial F(\tau(\alpha))}{\partial \tau_j(\alpha)}\frac{\partial \tau_j(\alpha)}{\partial \alpha} \mathrm{d}\alpha,
\end{equation}
where $\frac{\partial F(\tau(\alpha))}{\partial \tau_j(\alpha)}$ is the gradient of $F$ along the $j$th dimension.

Attribution methods based on path integrated gradients are collectively known as path methods. Sundararajan et al,. first introduce path integrated gradients to perform attribution for the deep network. Due to the absence of the real feature varying path, they specify the straightline as the path for integration. Using the straightline path $\tau(\alpha)=\vec{b}+\alpha(\vec{x} - \vec{b})$ for $\alpha \in [0,1]$, the integrated gradients \cite{pmlr-v70-sundararajan17a} to calculate the contribution value $c_{j}$ along the $j$th dimension for input $\vec{x}$ is defined as follows.
\begin{equation}
\label{eq:ig}
    c_{j} = IG^{\tau}_{j}(\vec{x})::= (\vec{x}_j - \vec{b}_j) \int^{1}_{\alpha=0}\frac{\partial F(\tau(\alpha))}{\partial \tau_j(\alpha)} \mathrm{d}\alpha.
\end{equation}
In the computer vision and natural language processing domains, when applying integrated gradients, the zero embedding vector is usually used as the baseline $\vec{b}$. Besides, as mentioned above, the straightline is the choice for the path. It seems there are no better path choices for the image models or natural language models as the feature varying process is unknown.
The zero-vector baseline and corresponding straightline are not suitable for many real problems as they do not really reflect how features change. For example, in an episode of RL, transition of state and action features happens between every two adjacent steps from time $t$ to $T$. Such a feature varying process cannot be depicted by the straightline from the starting state to the all-zero vector.

\section{QPD for MARL}
\label{section:framework}
Here we describe our QPD MARL framework and Figure~\ref{figure:framework} shows the overall learning framework. First, in Section~\ref{sec:critic}, we design a centralized critic which consists of modular channels to extract hidden states for different groups of agents to learn the global Q-value $Q_{tot}$ from agents' joint observations and actions. Then we leverage integrated gradients techniques on the multiagent multi-channel critic to decompose $Q_{tot}$ into individual Q-values $Q^i$ approximately for each agent in Section~\ref{sec:decomposition}. Such a decomposition process addresses the multiagent credit assignment via the covariation analysis of each agent's observations and actions along the trajectory path. The decomposed individual value which approximates $Q^i$ is used as the supervision signal to train each agent's recurrent Q-value network. Finally, we give the algorithm details and training losses in Section~\ref{sec:alg}.

\begin{figure}[ht]
\centering
{\includegraphics[height=4.0in,angle=0]{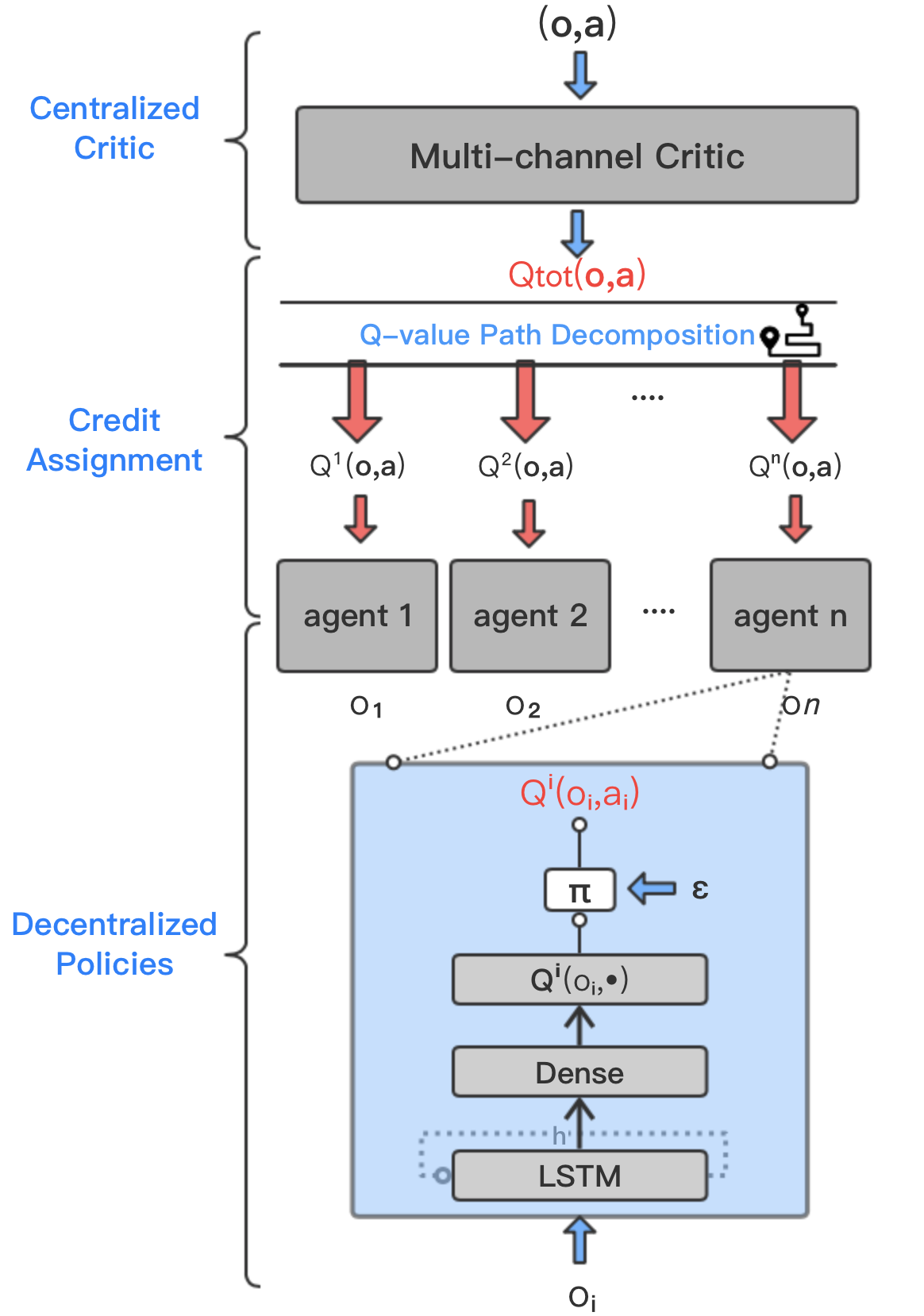}}
\caption{The overall QPD Framework. The top block is the centralized critic with a multi-channel modular design. The middle block is applying the Q-value path decomposition technique to achieve credit assignments on the agent level. The $Q_{tot}$ is decomposed into the supervision signals for $Q^{i}$. The bottom block shows the network architecture of the agent policies, which are implemented by the recurrent deep Q-network.}
\label{figure:framework}
\end{figure}

\subsection{Value Decomposition Through Integrated Gradients}
\label{sec:decomposition}
In this section, we apply integrated gradients to assigning credits for each agent on the multi-channel critic by performing attribution on each own states and actions with respect to the output $Q_{tot}$. As DRL employs deep neural networks to approximate the global $Q_{tot}$, we could utilize the attribution tools in DL combined with concepts in RL to extract the contribution of specified sets of features from different agents to the predicted Q-value. To this end, in this paper, we propose a new multiagent credit assignment approach which utilizes integrated gradients on the state-action trajectory.

As mentioned above, in DL, it is usually unknown how features change from input to baseline. Thus, the straightline becomes the path choice for using integrated gradients in DL. However, in RL, a natural path luckily exists, which is the trajectory of state-action transitions in each episode and records how the state-action features change. As the trajectory path depicts the real feature varying process, we could achieve an accurate attribution. Using integrated gradients on the trajectory, we perform the global Q-value decomposition by attributing the global Q-value prediction to its input features. Applying integrated gradients into RL was first studied in RUDDER \cite{arjona-medina_rudder:_2018} to address the sparse delayed reward problem in single-agent RL and has shown excellent performance. However, one weakness of their approach is that they regard the zero vector as the baseline for all states, which ignores real state-action transitions. Another limitation is that they do not use the trajectory but the straightline between current states and the zero vector as the path when applying integrated gradients, thus making the decomposition inaccurate. Different from RUDDER, we utilize the basic trajectory concept in RL to avoid above issues and then use integrated gradients to naturally conduct multiagent credit assignment.


Now we introduce how to use the path integrated gradients on trajectories to decompose $Q_{tot}$ into approximative individual $Q^i$. The key to path integrated gradients is to find the correct changing path of each agent's state-action features. As we analyzed previously, such a path could be depicted by the state-action transition trajectory in an RL environment, which captures the state-action feature transformation process from the start state to the termination state. Besides, with the trajectory as the path, we can naturally use the termination state $s_T$ as baseline where $Q(s_T, \varnothing)=0$. $\varnothing$ means no action is further taken at the termination state.
After specifying both the integration path and baseline, we employ integrated gradients on the trajectory path to decompose the critic's prediction $Q_{tot}$ to each agent's local observations and actions to implement the credit assignment. Formally, using joint observations $\vec{o}$ to represent the global state $s$, we have Equation~\ref{eq:decomposeig} and the proof is provided in Theorem~\ref{prop:add}.
\begin{equation}
\small
\begin{split}
    & Q_{tot}(\vec{o_t},\vec{a_t})
     = \sum_{x_j\in \mathbb{X}_{1}}PathIG^{\tau_{t}^{T}}_{j}(\vec{o_t},\vec{a_t}) + ... + \sum_{x_j\in \mathbb{X}_{n}}PathIG^{\tau_{t}^{T}}_{j}(\vec{o_t},\vec{a_t}). \\
\end{split}
\label{eq:decomposeig}
\end{equation}
where $\tau_{t}^{T}$ is the trajectory path from time $t$ to $T$, and every two adjacent joint observations and actions are connected by straightlines. $\mathbb{X}_{i}$ is the set of agent $i$'s observation features and action dimensions. By decomposing the global Q-value following the real trajectory path, we get each agent's individual contribution to $Q_{tot}$ based on its own observation and action. Because the attribution reveals how much each agent's own observation and action contributes to $Q_{tot}$ by following the real trajectory path, we regard the attribution value $\sum_{x_j\in \mathbb{X}_{i}}PathIG^{\tau}_{j}(\vec{o_t},\vec{a_t})$ of agent $i$'s observation-action features as its approximative individual Q-value $Q^i(\vec{o_t},\vec{a_t})$.
\begin{equation}
    Q^i(\vec{o_t},\vec{a_t}) \approx \sum_{x_j\in \mathbb{X}_{i}}PathIG^{\tau_{t}^{T}}_{j}(\vec{o_t},\vec{a_t}).
\end{equation}
Then the next question is how to compute $\sum_{x_j\in \mathbb{X}_{i}}PathIG^{\tau_{t}^{T}}_{j}(\vec{o_t},\vec{a_t})$. As paths between every two adjacent joint observations and actions are straightlines in the path $\tau_{t}^{T}$, we can directly apply integrated gradients on the line between every adjacent joint observations and actions from $(\vec{o}_{t+1},\vec{a}_{t+1})$ to $(\vec{o}_{t},\vec{a}_{t})$ as shown in Equation~\ref{eq:pathig2ig}.
\begin{equation}
\small
\label{eq:pathig2ig}
\begin{split}
    & \sum_{x_j\in \mathbb{X}_{i}}PathIG^{\tau_{t}^{T}}_{j}(\vec{o_t},\vec{a_t}) =
    \\ & \sum_{x_j\in \mathbb{X}_{i}}IG^{\tau_t^{t+1}}_{j}(\vec{o},\vec{a})
     + \sum_{x_j\in \mathbb{X}_{i}}IG^{\tau_{t+1}^{t+2}}_{j}(\vec{o},\vec{a}) + 
     ... + \sum_{x_j\in \mathbb{X}_{i}}IG^{\tau_{T-1}^{T}}_{j}(\vec{o},\vec{a}).
\end{split}
\end{equation}
Using integrated gradients to decompose $Q_{tot}$ makes the most of the available global information while previous works such as VDN and QMIX compute individual $Q^i$ from agents' local observations and actions and limit the accuracy of $Q^i$. 
Next, in Theorem~\ref{prop:add}, we prove that decomposing global $Q_{tot}$ through the trajectory satisfies the additive property across agents, which realizes an intact decomposition.
Before proof, we introduce one important property of integrated gradients that the attributions add up to the difference between function $F$'s outputs at the input $\vec{x}$ and baseline $\vec{b}$, which will be used in proving Theorem~\ref{prop:add}.
\begin{proposition}
If $F:R^d \leftarrow R$ is differentiable almost everywhere, then
\begin{equation}
\label{eq:sum_ig}
    \sum_{j=1}^{|\vec{x}|}IG^{\tau}_j(\vec{x})=F(\vec{x})-F(\vec{b}),
\end{equation}
\end{proposition}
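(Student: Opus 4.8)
The plan is to derive the identity directly from the definition of the straightline integrated gradients in Equation~\ref{eq:ig} together with the fundamental theorem of calculus. First I would recall the explicit path $\tau(\alpha)=\vec{b}+\alpha(\vec{x}-\vec{b})$, so that $\frac{\partial \tau_j(\alpha)}{\partial \alpha}=\vec{x}_j-\vec{b}_j$, and substitute the definition of $IG^{\tau}_j(\vec{x})$ into the left-hand side of Equation~\ref{eq:sum_ig}. Since the sum over the $d$ coordinates is finite, it may be moved inside the integral, giving
\[
\sum_{j=1}^{d} IG^{\tau}_j(\vec{x}) = \int_0^1 \sum_{j=1}^{d}(\vec{x}_j-\vec{b}_j)\,\frac{\partial F(\tau(\alpha))}{\partial \tau_j(\alpha)}\,\mathrm{d}\alpha .
\]

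Next I would identify the integrand as the total derivative of the scalar map $\alpha \mapsto F(\tau(\alpha))$. By the chain rule,
\[
\frac{\mathrm{d}}{\mathrm{d}\alpha}F(\tau(\alpha)) = \sum_{j=1}^{d}\frac{\partial F(\tau(\alpha))}{\partial \tau_j(\alpha)}\cdot\frac{\partial \tau_j(\alpha)}{\partial \alpha} = \sum_{j=1}^{d}(\vec{x}_j-\vec{b}_j)\,\frac{\partial F(\tau(\alpha))}{\partial \tau_j(\alpha)},
\]
which is exactly the integrand above. Hence the right-hand side collapses to $\int_0^1 \frac{\mathrm{d}}{\mathrm{d}\alpha}F(\tau(\alpha))\,\mathrm{d}\alpha$, and applying the fundamental theorem of calculus yields $F(\tau(1))-F(\tau(0)) = F(\vec{x})-F(\vec{b})$, which is the claimed identity.

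The main obstacle is regularity: $F$ is only assumed differentiable almost everywhere, so neither the chain rule nor the fundamental theorem of calculus applies verbatim. I would handle this by a Fubini-type argument showing that the straightline $\tau$ intersects the (measure-zero) set where $F$ fails to be differentiable in a set of $\alpha$-measure zero for almost every $(\vec{x},\vec{b})$, so the integrand is well defined a.e., and by observing that $\alpha\mapsto F(\tau(\alpha))$ is absolutely continuous whenever $F$ is locally Lipschitz (as the deep networks used here, being piecewise linear, are). Absolute continuity is precisely the hypothesis under which the fundamental theorem of calculus holds for the a.e.\ derivative, so the last step goes through. I expect to relegate these measure-theoretic details to a short remark, since in our setting $F$ is piecewise linear and the argument becomes elementary.
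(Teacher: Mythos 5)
Your argument is correct: substituting the straightline definition of $IG^{\tau}_j$ from Equation~\ref{eq:ig}, exchanging the finite sum with the integral, recognizing the integrand as $\frac{\mathrm{d}}{\mathrm{d}\alpha}F(\tau(\alpha))$ via the chain rule, and applying the fundamental theorem of calculus gives exactly $F(\vec{x})-F(\vec{b})$. Note that the paper itself does not prove this proposition; it imports it as the known ``completeness'' property of integrated gradients from Sundararajan et al.\ and uses it only as a lemma for Theorem~\ref{prop:add}, so there is no in-paper proof to compare against --- your derivation is the standard one from that literature. Your treatment of the regularity issue is, if anything, more careful than the cited source: observing that for the networks in question (piecewise linear in the ReLU case, smooth in the sigmoid case) $F$ is locally Lipschitz, so $\alpha\mapsto F(\tau(\alpha))$ is absolutely continuous and the fundamental theorem of calculus applies to its a.e.\ derivative, is precisely the right way to close the gap left by the ``differentiable almost everywhere'' hypothesis; the only small caveat is that the Fubini-type claim holds for almost every pair $(\vec{x},\vec{b})$ rather than for every line, but in the piecewise-linear/smooth setting the composition is piecewise linear (resp.\ smooth) along any line, so the identity holds unconditionally there.
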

where $j$ is the feature index and $|\vec{x}|$ gives the number of features. $\tau$ represents the straight path between $\vec{x}$ and $\vec{b}$. Deep networks built out of Sigmoids, Relus, and pooling operators satisfy the differentiable condition. Using Equation~\ref{eq:sum_ig} and the definition of $PathIG$ and $IG$ in Equation~\ref{eq:pathig} and~\ref{eq:ig}, we could decompose the $Q_{tot}$ completely to individual contributions through the trajectory path.
\begin{theorem}
\label{prop:add}
Let $\tau_t^{T}$ represents the joint observation and action trajectory from step $t$ to the termination step $T$, then
\begin{equation}
    Q_{tot}(\vec{o}_t,\vec{a}_t)=\sum_{i=1}^{n} \sum_{x_{j}\in \mathbb{X}_{i}}PathIG^{\tau_t^{T}}_{j}(\vec{o},\vec{a}).
\end{equation}
\end{theorem}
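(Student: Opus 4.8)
The plan is to combine two facts: the completeness property of integrated gradients on a single straightline segment (Proposition, Equation~\ref{eq:sum_ig}), and the definition of the path integrated gradient along the trajectory $\tau_t^T$ as a sum of segment integrated gradients (Equation~\ref{eq:pathig2ig}, specialized to partitioning the feature index set over all agents rather than a single agent). First I would observe that the feature index sets $\mathbb{X}_1,\dots,\mathbb{X}_n$ form a partition of the full set of coordinates of the joint observation–action vector $(\vec{o},\vec{a})$, so that summing over $i=1,\dots,n$ and then over $x_j\in\mathbb{X}_i$ is just summing over every feature dimension $j$ in $|(\vec{o},\vec{a})|$. Hence the right-hand side equals $\sum_{j} \mathit{PathIG}^{\tau_t^T}_j(\vec{o},\vec{a})$.

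Next I would unfold the path integral along the trajectory. Since $\tau_t^T$ is defined as the concatenation of straightline segments connecting $(\vec{o}_{t+1},\vec{a}_{t+1})\to(\vec{o}_t,\vec{a}_t)$, then $(\vec{o}_{t+2},\vec{a}_{t+2})\to(\vec{o}_{t+1},\vec{a}_{t+1})$, and so on up to $(\vec{o}_T,\varnothing)\to(\vec{o}_{T-1},\vec{a}_{T-1})$, the defining integral in Equation~\ref{eq:pathig} is additive over these pieces: $\mathit{PathIG}^{\tau_t^T}_j = \sum_{l=t}^{T-1} IG^{\tau_l^{l+1}}_j$. Swapping the order of the finite sums, $\sum_j \mathit{PathIG}^{\tau_t^T}_j(\vec{o},\vec{a}) = \sum_{l=t}^{T-1}\sum_j IG^{\tau_l^{l+1}}_j(\vec{o},\vec{a})$. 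Now I apply the Proposition to each inner sum: on the segment from $(\vec{o}_{l+1},\vec{a}_{l+1})$ (baseline) to $(\vec{o}_l,\vec{a}_l)$ (input), $\sum_j IG^{\tau_l^{l+1}}_j = Q_{tot}(\vec{o}_l,\vec{a}_l) - Q_{tot}(\vec{o}_{l+1},\vec{a}_{l+1})$. Summing this telescoping series over $l=t,\dots,T-1$ collapses to $Q_{tot}(\vec{o}_t,\vec{a}_t) - Q_{tot}(\vec{o}_T,\varnothing)$, and the baseline term vanishes because $Q(s_T,\varnothing)=0$ as fixed earlier when choosing the termination state as baseline. This yields exactly $Q_{tot}(\vec{o}_t,\vec{a}_t)$.

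The main obstacle—really the only point needing care—is justifying the hypothesis of the Proposition along each segment, namely that $Q_{tot}$ viewed as a function of the joint observation–action input is differentiable almost everywhere. I would invoke the remark already made in the excerpt: networks built from Sigmoids, ReLUs, and pooling operators are differentiable almost everywhere, and the multi-channel critic is such a network, so the condition holds on each straightline segment. A secondary subtlety is the bookkeeping at the terminal segment, where the "action" coordinate is the empty action $\varnothing$; I would treat this simply as the convention that the corresponding coordinates are held at their baseline value (e.g. zero) so that the completeness identity still reads $Q_{tot}(\vec{o}_{T-1},\vec{a}_{T-1}) - Q_{tot}(\vec{o}_T,\varnothing)$ with $Q_{tot}(\vec{o}_T,\varnothing)=0$. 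With these two points addressed, the argument is just partition-of-indices plus a telescoping sum, and no genuine calculation is required.
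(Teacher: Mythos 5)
Your proposal is correct and follows essentially the same route as the paper's own proof: partition the feature indices over agents, split the trajectory path into straightline segments, apply the completeness property (the Proposition) on each segment, and collapse the telescoping sum using $Q_{tot}(\vec{o}_T,\varnothing)=0$ at the terminal baseline. You merely run the chain of equalities in the reverse direction and are somewhat more explicit than the paper about the almost-everywhere differentiability hypothesis and the convention for the terminal action coordinates, which are welcome but not substantively different points.
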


\begin{proof}
Let $\vec{x}_t$ represents the feature vector $(\vec{o}_t,\vec{a}_t)$ concisely. $\tau_{t}^{T}$ is composed of $(\tau_{t}^{t+1}, \tau_{t+1}^{t+2}, ..., \tau_{T-1}^{T})$, where $\tau_{t}^{t+1}$ is the straightline path from $(\vec{o}_{t},\vec{a}_{t})$ to $(\vec{o}_{t+1},\vec{a}_{t+1})$.
\begin{equation}\nonumber
\small
\label{eq:valueinvariance}
\begin{split}
     & \;\, Q_{tot}(\vec{o}_t,\vec{a}_t) = Q_{tot}(\vec{x}_t) = Q_{tot}(\vec{x}_t) - Q_{tot}(\vec{x}_{T}) = Q_{tot}(\vec{x}_t) - Q_{tot}(\vec{x}_{t+1}) \\
     & \;\,  + Q_{tot}(\vec{x}_{t+1}) - Q_{tot}(\vec{x}_{t+2}) +...+ Q_{tot}(\vec{x}_{T-1}) - Q_{tot}(\vec{x_{T}})
    \\ & = \sum_{j=1}^{|\vec{x_{t}}|}IG^{\tau_t^{t+1}}_j(\vec{x}) + \sum_{j=1}^{|\vec{x_{t}}|}IG^{\tau_{t+1}^{t+2}}_j(\vec{x}) + 
     ... + \sum_{j=1}^{|\vec{x_{t}}|}IG^{\tau_{T-1}^{T}}_j(\vec{x})
     \\ & = PathIG^{\tau_t^{T}}_{j=1}(\vec{x}) + PathIG^{\tau_t^{T}}_{j=2}(\vec{x}) + ... + PathIG^{\tau_t^{T}}_{j=|\vec{x_{t}}|}(\vec{x}) 
    \\ & = \sum_{x_{j}\in \mathbb{X}_{1}}PathIG^{\tau_t^{T}}_{j}(\vec{x}) + \sum_{x_{j}\in \mathbb{X}_{2}}PathIG^{\tau_t^{T}}_{j}(\vec{x}) + ... + \sum_{x_{j}\in \mathbb{X}_{n}}PathIG^{\tau_t^{T}}_{j}(\vec{x})
    \\ & = \sum_{i=1}^{n} \sum_{x_{j}\in \mathbb{X}_{i}}PathIG^{\tau_t^{T}}_{j}(\vec{x}) = \sum_{i=1}^{n} \sum_{x_{j}\in \mathbb{X}_{i}}PathIG^{\tau_t^{T}}_{j}(\vec{o},\vec{a})
\end{split}
\end{equation}
\end{proof}
Line 4 to line 6 in the proof shows that, as we apply integrated gradients at every adjacent joint state and actions along the trajectory, we aggregate each agent's features' attribution into the contribution of each agent for the global Q-values. Finally, we conclude that integrated gradients on the trajectory path attributes the global Q-value to each agent's feature changes and the decomposition is intact. From the angle of the path integrated gradients, we here find the right feature varying process in RL and then follow the trajectory path to decompose $Q_{tot}$ to individual Q-values on account of each agent's observation and action features.

\subsection{Multi-channel Critic}
\label{sec:critic}

\begin{figure}[htbp]
\centering
{\includegraphics[height=2.666in,angle=0]{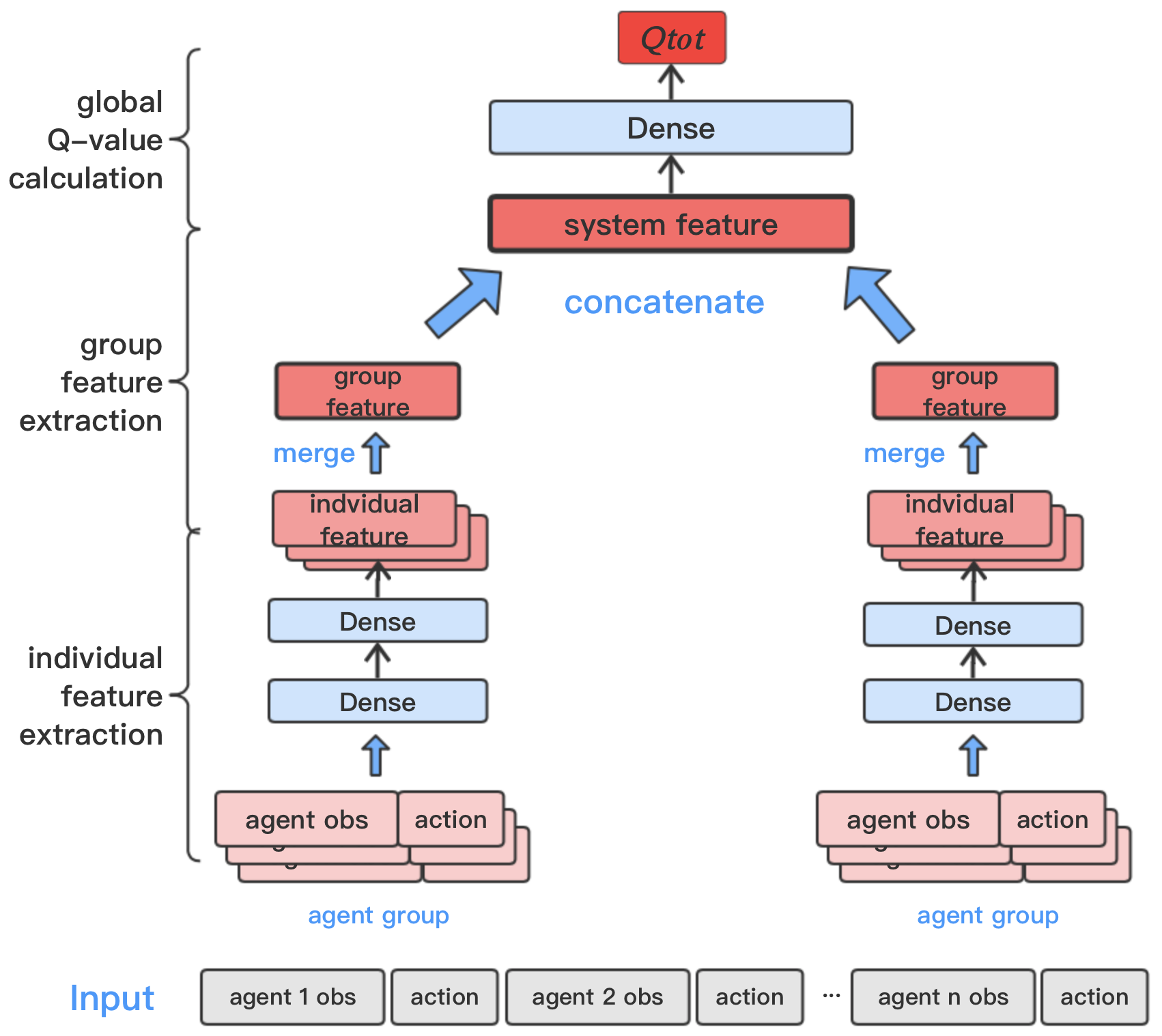}}
\caption{Multi-channel Critic.}
\label{figure:mc-critic}
\end{figure}

In realistic MAS, there may exist heterogeneous agents of different kinds. The space of agents' joint states and actions is very large in such systems, causing the learning of the global Q-value extremely hard. Although agents in MAS are unique, they can also be categorized into different groups according to their feature attributions and personal profile. This fact enlightens us on using sub-network channels to extract information with one channel for one agent group. From bottom to top, agents can be first classified as several kinds of groups and then summarized as a unified system. Based on such a MAS abstraction, we design the multi-channel network structure as illustrated in Figure~\ref{figure:mc-critic} to collect the hidden states from each agent's decentralized observations and actions instead of simply using full-connected layers. At the same time, as there may exist homogeneous agents of the same kind group, we use parameter sharing for homogeneous agents. This technique is adopted widely in many complicated environments and challenging tasks \cite{yang_mean_2018,iqbal_att_maac_2018} and could effectively reduce the network parameters and accelerate learning.

The structure of the critic includes three components: individual feature extraction process, group feature extraction process and the system's global Q-value calculation process. 
We first use the individual feature extracting modules to extract embeddings for agents with one channel responding to one agent group. Next, the group feature merging operation combines the embeddings from the same group and then concatenates them into system features. The merging operation could be either concatenation or addition. Finally, the high-level system features are used to calculate system's Q-values. Following the multi-channel structure, we implicitly represent MAS from decentralized agents to a centralized system. Such a modular critic structure provides a succinct representation of the multiagent Q-value while the number of network parameters can be significantly reduced as well.


\subsection{Algorithm and Training Process}
\label{sec:alg}

\begin{algorithm}[tb]
\caption{Q-value Path Decomposition algorithm}
\label{alg:algorithm}
\textbf{Initialize}: Critic network $\theta^{c}$, target critic $\widetilde{\theta}^{c}$ and agents' Q-value networks $\theta^{\boldsymbol{\pi}}=(\theta^{1}, ..., \theta^{n})$
\begin{algorithmic}[1] 
\FOR {each training episode $e$}
\STATE $s_{0}=$ initial state, $t=0$, $h_{0}^{i}=\boldsymbol{0}$ for each agent $i$.
\WHILE{$s_{t} \neq$ terminal \AND $t < T$}
\STATE $t = t + 1$.
\FOR {each agent $i$}
\STATE $Q^{i}(o_{t,i}, \cdot), h_{t}^{i} = $DRQN$(o_{t,i},h_{t-1}^{i};\theta^{i})$.
\STATE Sample $a_{t,i}$ from $\pi_{i}(Q^{i}(o_{t,i}, \cdot),\epsilon(e))$.
\ENDFOR
\STATE execute the joint actions $(a_{t,1}, a_{t,2}, ..., a_{t,n})$.
\STATE receive the reward $r_{t}$ and next state $s_{t+1}$.
\ENDWHILE
\STATE Add episode to buffer and sample a batch of episodes.
\FOR{$e$ in batch}
\FOR{$t=1$ to $T$}
\STATE Calculate targets $y_{t}$ using $\widetilde{\theta}^{c}$.
\ENDFOR
\ENDFOR
\STATE Update critic parameters $\theta^{c}$ with loss $\mathcal{L}(\theta^{c})$.
\STATE Every $C$ episodes reset $\widetilde{\theta}^{c}=\theta^{c}$.
\FOR{$e$ in batch}
\FOR{$t=1$ to $T$}
\STATE Unroll LSTM using states, actions and rewards.
\STATE Using the Integrated Gradients along with the trajectory $e$ to decompose $Q_{tot}$ at time $t$ into $\widetilde{Q}^{i}_{t}=\sum_{x_j\in \mathbb{X}_{i}}PathIG^{\tau}_{x_j}(\vec{o_t},\vec{a_t})$ for each agent $i$.
\ENDFOR
\ENDFOR
\STATE Update $\theta^{\pi}$ with loss $\mathcal{L}(\theta^{i})$ for each agent $i$.
\ENDFOR
\end{algorithmic}
\end{algorithm}

The algorithm details are shown in Algorithm~\ref{alg:algorithm}. Line 2-10 shows that the decentralized agents interact with the environment. Next, Line 13-19 update the critic and target critic networks. The centralized critic $Q_{tot}$ is trained to minimize the loss $\mathcal{L}(\theta^{c})$ as defined in Equation~\ref{eq:critic_loss}.
\begin{equation}
\label{eq:critic_loss}
\begin{split}
\mathcal{L}(\theta^{c}) & = E_{\vec{o}, \vec{a}, r, \vec{o}^{\prime}}[(Q_{tot}^{\theta^{c}}(o_{1},...,o_{n},a_{1},...,a_{n}) - y)^{2}], \\
& y = r + \gamma (Q_{tot}^{\widetilde{\theta}^{c}}(o_{1}^{\prime},...,o_{n}^{\prime},a_{1}^{\prime},...,a_{n}^{\prime}),
\end{split}
\end{equation}
where $\theta^{c}$ is the critic parameters and $\widetilde{\theta}^{c}$ is the target critic parameters, which are reset every $C$ episode. Agent $i$'s network parameters are remarked as $\theta^{i}$. At last, Line 20-26 update each agent's individual Q-value network using the decomposed $\widetilde{Q}^{i}$ as the target label for each agent $i$. The loss of agent $i$'s Q-value network is defined as Equation~\ref{eq:policy_loss}.
\begin{equation}
\label{eq:policy_loss}
\begin{split}
\mathcal{L}(\theta^{i}) & = E_{\vec{o}, \vec{a}, r, \vec{o}^{\prime}}[(Q^{i,\theta^{i}}(o_{i},a_{i}) - \widetilde{Q}^{i})^{2}], \\
& \widetilde{Q}^{i} = \sum_{x_j\in \mathbb{X}_{i}}PathIG^{\tau}_{j}(\vec{o},\vec{a}).
\end{split}
\end{equation}
Notably, for each training, we sample a batch of complete trajectories in the replay buffer for updating. The agent network in the realistic implement is a Recurrent Deep Q-Network (RDQN), which is the basic DQN augmented with the LSTM units. Besides, the exploration policy is $\epsilon$-greedy with $\epsilon(e)$ being the exploration rate as Equation~\ref{eq:epsilon}.
\begin{equation}
\label{eq:epsilon}
\epsilon(e) = \max{(\epsilon_{init} - e * \delta, 0)},
\end{equation}
where $e$ is the episode number. $\epsilon_{init}$ is the start exploration rate and $\delta$ gives the decreasing amount of $\epsilon$ at each episode.

\section{Experiment and Analysis}
\label{section:experiment}

\subsection{Experimental Setup}
In this section, we describe the StarCraft II decentralized micromanagement problems, in which each of the learning agents controls an individual allied army unit. The enemy units are controlled by a built-in StarCraft II AI, which makes use of handcrafted heuristics. The difficulty of the game AI is set to the "very difficult" level. At the beginning of each episode, the enemy units are going to attack the allies. Proper micromanagement of units during battles are needed to maximize the damage to enemy units while minimizing damage received, hence requires a range of skills such as focus fire and avoid overkill. Learning these diverse cooperative behaviors under partial observation is a challenging task, which has become a common-used benchmark for evaluating state-of-the-art MARL approaches such as COMA \cite{foerster_counterfactual_2018}, QMIX \cite{rashid_qmix:_2018} and QTRAN \cite{son_qtran_2019}. We use StarCraft Multi-Agent Challenge (SMAC) environment \cite{samvelyan19smac} as our testbed. More setup details are in the Appendix.

\subsubsection{Network and Training Configurations}
The architecture of agent Q-networks is a DRQN with an LSTM layer with a 64-dimensional hidden state, with a fully-connected layer after, and finally a fully-connected layer with $|A|$ outputs. The input for agent networks is the sequential data which consists of the agent's local observation in recent 12 time steps for all scenarios. The architecture of the QPD critic is a feedforward neural network with the first two dense layers having 64 units for each channel, and then being concatenated or added in each group, and next being concatenated to the output layer of one unit. We set $\gamma$ at 0.99. To speed up learning, we share the parameters across all individual Q-networks and a one-hot encoding of the agent type is concatenated onto each agent’s observations to allow the learning of diverse behaviors. All agent networks are trained using RMSprop with a learning rate of $5\times10^{-4}$ and the critic is trained with Adam with the same learning rate. Replay buffer contains the most recent 1000 trajectories and the batch size is 32. Target networks for the global critic are updated after every 200 training episodes.

\subsubsection{Decomposition Path Settings}
For the Q-value decomposition process, integrated gradients can be efficiently approximated via a summation at points occurring at sufficiently small intervals along the trajectory path over each pair of consecutive state-action transitions $(\vec{o}_{t},\vec{a}_{t})$ and $(\vec{o}_{t+1},\vec{a}_{t+1})$. Then the gradient integral path is obtained by repeatedly interpolating between every two adjacent states from the current state to the terminated state. With $m$ being the number of steps in the Riemman approximation and $\vec{x}_t$ being $(\vec{o}_{t},\vec{a}_{t})$ for simplification, we calculate the integrated gradients for every two adjacent states as:
\begin{equation}
\small
\begin{split}
    & \widetilde{IG}^{\tau_t^{t+1}}_{j}(\vec{o}_{t},\vec{a}_{t}) = \widetilde{IG}^{\tau_t^{t+1}}_{j}(\vec{x}_t) ::= \\
    & (\vec{x}_{t,j} - \vec{x}_{t+1,j}) \times \sum_{k=1}^m \frac{\partial F(\vec{x}_{t+1}+\frac{k}{m} \times (\vec{x}_t-\vec{x}_{t+1}))}{\partial (\vec{x}_{t+1}+\frac{k}{m} \times (\vec{x}_t-\vec{x}_{t+1}))} \times \frac{1}{m}.
\end{split}
\end{equation}
Although larger $m$ could obtain more accurate decomposition, due to the trade-off of high qualified performance and limited computation time and resources, we set $m$ at 5 after experimental study but it has exhibited impressive performance, which could be referred in Section~\ref{sec:decom_step}.

\subsection{Results}
To validate QPD, we evaluate it on both homogeneous and heterogeneous scenarios. To encourage exploration, we use $\epsilon$-greedy which anneals from 1 to 0 at the first 2000 episodes. We test our method at every 100 training episodes on 100 testing episodes with exploratory behaviors disabled. The main evaluation metric is the win percentage of evaluation episodes over the course of training \cite{samvelyan19smac}.
The results include the median performance as well as the 25-75\% percentiles recommended in \cite{samvelyan19smac} to avoid the effect of any outliers. Another metric, the mean win rate over all runs, is also reported. All experiments are conducted across 12 independent runs and QPD's learning curves on all maps are shown in Figure~\ref{figure:results}.
\begin{figure}[htbp]
\centering
\subfigure[\scriptsize{Map 3m}]{
\label{3m}
\includegraphics[width=0.18\textwidth]{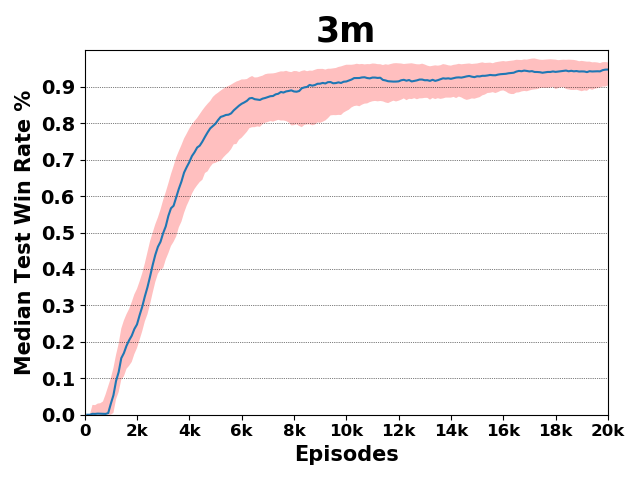}}
\subfigure[\scriptsize{Map 8m}]{
\label{8m}
\includegraphics[width=0.18\textwidth]{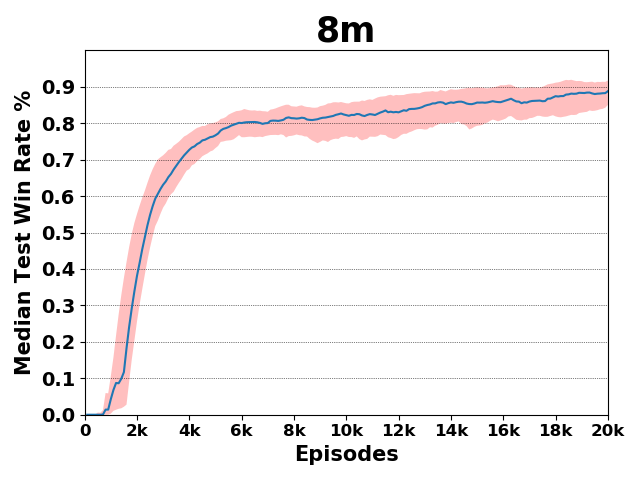}}
\subfigure[\scriptsize{Map 2s3z}]{
\label{2s3z}
\includegraphics[width=0.18\textwidth]{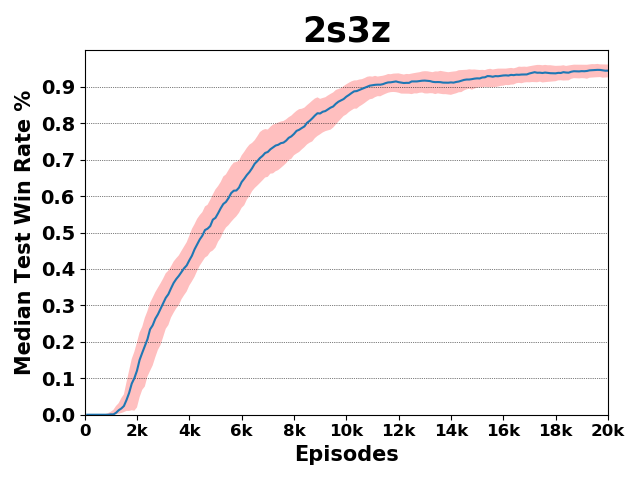}}
\subfigure[\scriptsize{Map 3s5z}]{
\label{3s5z}
\includegraphics[width=0.18\textwidth]{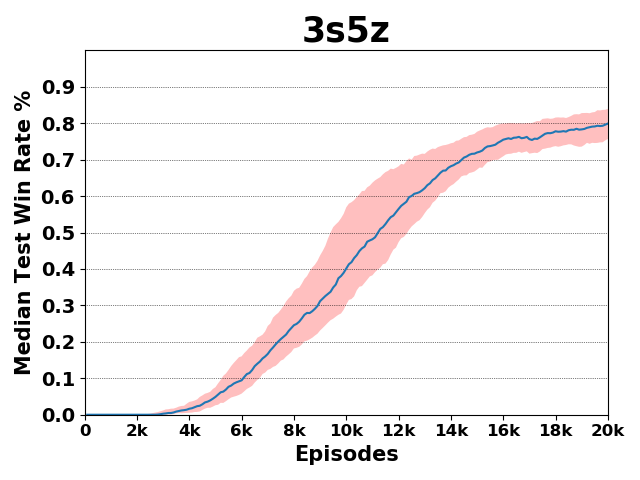}}
\subfigure[\scriptsize{Map 1c3s5z}]{
\label{1c3s5z}
\includegraphics[width=0.18\textwidth]{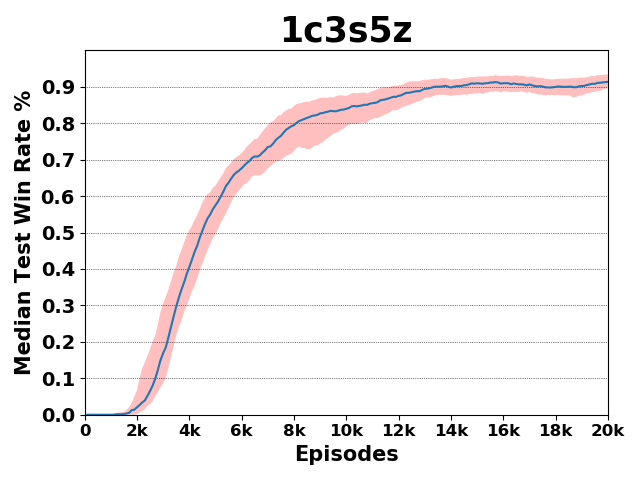}}
\subfigure[\scriptsize{Map 3s5z\_vs\_3s6z}]{
\label{3s5z_vs_3s6z}
\includegraphics[width=0.18\textwidth]{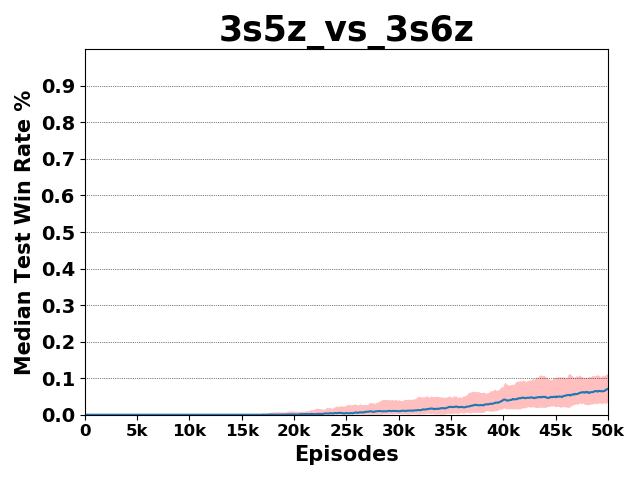}}
\caption{QPD's median win percentage of different map scenarios. 25\%-75\% percentile is shaded.}
\label{figure:results}
\end{figure}


All maps are of the different agent number or different types. Both sides in Map 3m have 3 Marines while in Map 8m have 8 Marines. In Map 2s3z, both sides have 2 Stalkers and 3 Zealots. For Map 3s5z, both sides have 3 Stalkers and 5 Zealots. Map 1c3s5z, both sides have an extra Colossus compared with Map 3s5z.
In map 3s5z\_vs\_3s6z, ally has 3 Stalkers and 5 Zealots while enemy has 3 Stalkers and 6 Zealots. 
To compare QPD with existing MARL methods, we use results from SMAC \cite{samvelyan19smac} because methods in their report show higher performance than the original works \cite{rashid_qmix:_2018,foerster_counterfactual_2018} and our implementation. We also compare with QTRAN. Table~\ref{tab:results} shows the evaluation metric results, where $\widetilde{m}$ is the median win percentage and $\overline{m}$ is the mean win percentage.

\begin{table}[ht]
\scriptsize
\centering
\caption{Median and mean performance of the test win percentage.}
\label{tab:results}
\begin{tabular}{|c|c|c|c|c|c|c|c|c|c|c|}
\hline
\multirow{2}{*}{Map}                                         & \multicolumn{2}{c|}{IQL} & \multicolumn{2}{c|}{COMA} & \multicolumn{2}{c|}{QMIX} & \multicolumn{2}{c|}{QTRAN} & \multicolumn{2}{c|}{QPD} \\ \cline{2-11} 
                                                             & $\small{\widetilde{m}}$       & $\small{\overline{m}}$      & $\small{\widetilde{m}}$       & $\small{\overline{m}}$       & $\small{\widetilde{m}}$       & $\small{\overline{m}}$       & $\small{\widetilde{m}}$        & $\small{\overline{m}}$       & $\small{\widetilde{m}}$       & $\small{\overline{m}}$      \\ \hline
3m                                                           & \textbf{100} & 97 & 91 & 92 & \textbf{100} & 99 & \textbf{100} & \textbf{100} & 95 & 92 \\ \hline
8m                                                           & 91 & 90 & 95 & 94 & \textbf{100} & 96 & \textbf{100} & \textbf{97} & 94 & 93 \\ \hline
2s3z                                                         & 39 & 42 & 66 & 64 & \textbf{100} & \textbf{97} & 77 & 80 & 95 & 94 \\ \hline
3s5z                                                         & 0 & 3 & 0 & 0 & 16 & 25 & 0 & 4 & \textbf{85} & \textbf{81} \\ \hline
1c3s5z                                                       & 7 & 8 & 30 & 30 & 89 & 89 & 31 & 33 & \textbf{92} & \textbf{92} \\ \hline
\begin{tabular}[c]{@{}c@{}}3s5z\\ \_vs\_\\ 3s6z\end{tabular} & 0 & 0 & 0 & 0 & 0 & 0 & 0 & 0 & \textbf{8} & \textbf{10} \\ \hline
\end{tabular}
\end{table}

We could see that QPD's performance is competitive with QMIX in three simple scenarios, 3m, 8m, 2s3z and 1c3s5z. More importantly, in the more difficult 3s5z where all existing methods perform poorly, QPD achieves superior performance much better than others. Furthermore, in a super hard scenarios 3s5z\_vs\_3s6z, QPD also beats other methods, where all other methods fail completely. To understand the rationale behind the results, we analyze the learned behaviors of agents. In 3m, agents learn the micro focus fire for beating enemies. Furthermore, in 8m, agents learn to stand into a line to shoot the enemy while avoiding overkill. In the heterogeneous 2s3z and 1c3s5z, both QMIX and QPD could solve it. Our method successfully learned to intercept the enemy Zealots with allied Zealots to protect the allied Stalkers from severe damage. However, in 3s5z, the learned policy of QPD is quite different from 2s3s: allied Zealots go around the enemy Zealots to attack the enemy Stalkers first and then attack the enemy Zealots with the allied Stalkers on both sides. Such a highly coordinated policy cannot be learned by QMIX \cite{samvelyan19smac}. 
In 3s5z\_vs\_3s6z, Zealots need to hold enemy's Zealots to protect ally's Stalkers and attack enemy's Stalkers at the same time. Such a behaviour is learned only by QPD which starts to win. Overall, QPD learns excellent decentralized policies comparable to the state-of-the-art MARL methods in both homogeneous and heterogeneous scenarios and outperforms QMIX and QTRAN in more complicated settings.

\subsection{Ablation}
\subsubsection{Multi-channel Critic Evaluation}
Using a modular network structure in the centralized critic is common in MARL algorithms and could effectively improve the performance \cite{iqbal_att_maac_2018,liu_pic_2019}. We also test the naive critic with several fully-connected dense layers, but we found this structure is with a high variance and its performance is lower than the modular ones. The reason is that the number of features fed into the critic is up to hundreds and increases quadratically with the number of agents, which causes a huge challenge for the naive network to learn effective hidden states from these features. Thus, we omit the naive critic's results. One main difference with previous modular critic methods is that, we explicitly consider the heterogeneous multiagent setting. We use different channels for different kinds of agents. Furthermore, we choose the concatenation operation as the way of the hidden features integration from each channel. We show this design could slightly improve the performance of QPD. The reason behind this phenomenon is clear. The multi-channel and concatenation operation own the greater representation ability to keep track of the feature influence of each agent of each kind in the multiagent Q-value prediction process.

\begin{figure}[htbp]
\centering
\subfigure[\scriptsize{Map 3m}]{
\label{3m-op-ab}
\includegraphics[width=0.23\textwidth]{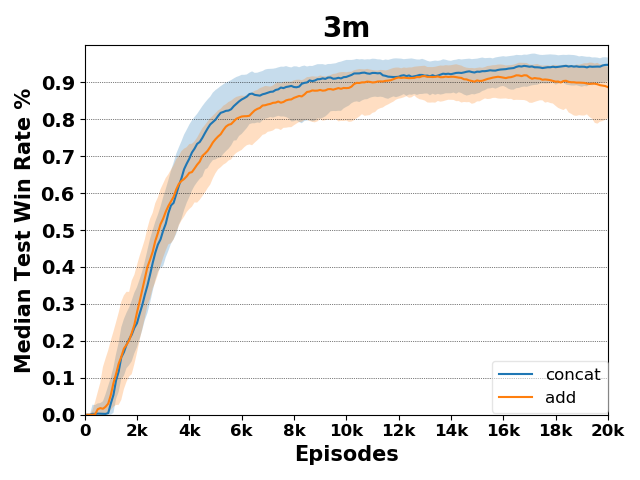}}
\subfigure[\scriptsize{Map 2s3z}]{
\label{2s3z-op-ab}
\includegraphics[width=0.23\textwidth]{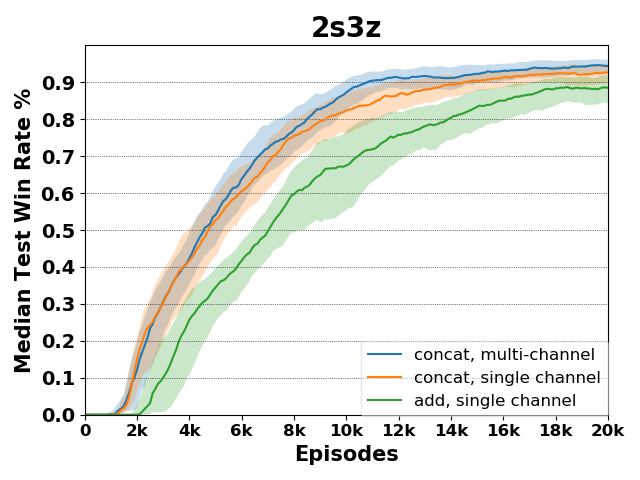}}
\caption{Median win percentage of 12 runs for critic ablation.}
\label{figure:critic_ablation}
\end{figure}

\begin{figure}[htbp]
\centering
\subfigure[\scriptsize{Map 3m}]{
\label{3m-ds-ab}
\includegraphics[width=0.23\textwidth]{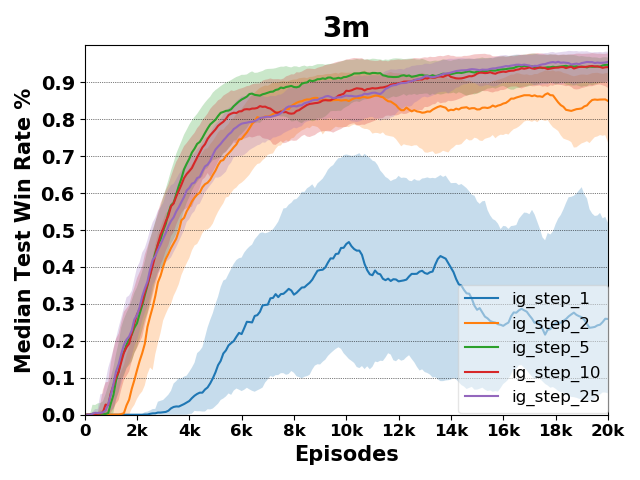}}
\subfigure[\scriptsize{Map 2s3z}]{
\label{2s3z-ds-ab}
\includegraphics[width=0.23\textwidth]{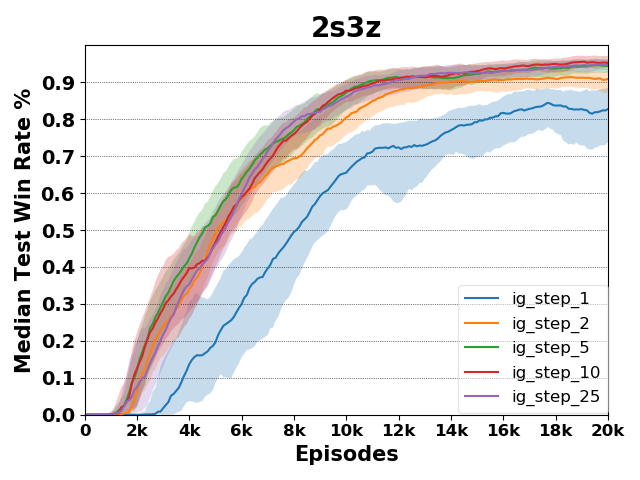}}
\caption{Median win percentage of 12 runs for decomposing steps.}
\label{figure:decom_step_ablation}
\end{figure}

\subsubsection{Decomposition Step}
\label{sec:decom_step}
As the integrated gradients is the core of QPD, it is critical and interesting to study the decomposition step's impact on the performance. Between each adjacent joint state-action pairs, we set the decomposition step of 1, 2, 5, 10 and 25 for studying. Results are presented in Figure~\ref{figure:decom_step_ablation}. As we can see, the decomposition step affects the performance a lot. When the decomposition steps is low, the decomposition is not accurate enough to assign credits for agents, thus making the training unstable and win rate low. But when the decomposition step increases, the more accurate decomposed individual Q-values could update the policies more accurately. Especially, QPD is capable of the setting of moderate decomposition step number, where step of 5 could reach a comparable performance level of step 10 and 25. It means that QPD does not require lots of computation resources for decomposing to reach a high performance.

 

\section{Conclusion and Future Work}
\label{section:conclusion}
In this paper, we propose QPD to solve the multiagent credit assignment problem in Dec-POMDP settings. Different from previous methods, we propose the trajectory-based integrated gradients attribution method to achieve effective Q-value decomposition at the agent level. Experiments on the challenging StarCraft II micromanagement tasks show that QPD learns well coordinated policies on various scenarios and reaches the state-of-the-art performance.

For the future work, better configurations of the path integrated gradients should be investigated to help attribution such as alternative choices of interpolation methods. Also, policy gradient methods combined with the path integrated gradients is expected to leverage better coordination.


\bibliography{example_paper}
\bibliographystyle{icml2020}

\clearpage
\appendix{}
\appendixpage

\section{Environment Settings}
\subsection{States and Observations}
We mainly follow the settings of SMAC \cite{samvelyan19smac}. At each time step, agents receive local observations within their field of view. This encompasses information about the map within a circular area around each unit with a radius equal to the sight range. The sight range makes the environment partially observable for each agent. An agent can only observe other agents if they are both alive and located within its sight range. Hence, there is no way for agents to distinguish whether their teammates are far away or dead. The feature vector observed by each agent contains the following attributes for both allied and enemy units within the sight range: distance, relative x, relative y, health, shield, and unit type. All Protos units have shields, which serve as a source of protection to offset damage and can regenerate if no new damage is received. The global state is composed of the joint observations but removing the restriction of sight range, which could be obtained during training in the simulations. All features, both in the global state and in individual observations of agents, are normalized by their maximum values.

\subsection{Action Space}
We follow the settings of SMAC \cite{samvelyan19smac}. The discrete set of actions which agents are allowed to take consists of move[direction], attack[enemy id], stop and no-op. Dead agents can only take no-op action while live agents cannot. Agents can only move with a fixed movement amount 2 in four directions: north, south, east, or west. To ensure decentralization of the task, agents are restricted to use the attack[enemy id] action only towards enemies in their shooting range. This additionally constrains the ability of the units to use the built-in attack-move macro-actions on the enemies that are far away. The shooting range is set to be 6 for all agents. Having a larger sight range than a shooting range allows agents to make use of the move commands before starting to fire. The unit behavior of automatically responding to enemy fire without being explicitly ordered is also disabled.

\subsection{Rewards}
We follow the settings of SMAC \cite{samvelyan19smac}. At each time step, the agents receive a joint reward equal to the total damage dealt on the enemy units. In addition, agents receive a bonus of 10 points after killing each opponent, and 200 points after killing all opponents for winning the battle. The rewards are scaled so that the maximum cumulative reward achievable in each scenario is around 20.

\section{Hyper-parameters}
The hyper-parameters of QPD are shown in Table~\ref{tab:hyper-para}, including training configurations and network configurations. More details could be referred in the provided source codes. Specially, the total training episode number for the 3s5z\_vs\_3s6z is 50000 while all other maps' total training episode number is 20000 as shown in the table. The architectures of agents' RDQN network and QPD's critic network are the same as shown in Figure~\ref{figure:framework} and Figure~\ref{figure:mc-critic} respectively.

\begin{table}[htbp]
\caption{Hyper-parameters of QPD.}
\label{tab:hyper-para}
\begin{tabular}{c|c|c}
\hline
Setting                                                                             & Name                   & Value         \\ \hline
\multirow{13}{*}{\begin{tabular}[c]{@{}c@{}}Training\\ configurations\end{tabular}} & Replay buffer size     & 1000 episodes \\
                                                                                    & Batch size             & 32 episodes   \\
                                                                                    & Total training episodes  & 20000         \\
                                                                                    & Exploration episodes   & 2000          \\
                                                                                    & Start exploration rate & 1.0           \\
                                                                                    & End exploration rate   & 0.0           \\
                                                                                    & Agent input length     & 12 steps      \\
                                                                                    & Gamma                  & 0.99          \\
                                                                                    & Target update interval & 200 episodes  \\
                                                                                    & Parallel environment   & 8             \\
                                                                                    & Training interval      & 100 episodes  \\
                                                                                    & Testing battle number  & 100 episodes  \\
                                                                                    & Decomposition step     & 5             \\ \hline
\multirow{5}{*}{\begin{tabular}[c]{@{}c@{}}Network\\ configurations\end{tabular}}   & Agent learning rate    & 0.0005        \\
                                                                                    & Critic learning rate   & 0.0005        \\
                                                                                    & Agent RDQN optimizer   & RMSProp        \\
                                                                                    & Critic optimizer       & Adam        \\
                                                                                    & Channel dense unit     & 64            \\
                                                                                    & LSTM unit              & 64            \\
                                                                                    & Clipping global norm   & 5             \\ \hline
\end{tabular}
\end{table}

\end{document}